\documentclass[10pt]{article}
\topmargin=-.6in    
\textheight=8.5in  
\oddsidemargin=5pt  
\textwidth=38.5pc 
\usepackage{amsthm}
\usepackage{amsmath,amssymb}
\title{\large \textbf{Conditional coloring of some parameterized graphs}}
\author{
\small P.~Venkata Subba Reddy  and K.~Viswanathan Iyer \\
\small Dept. of Computer Science and Engineering \\  
\small National Institute of Technology \\ 
\small Tiruchirapalli 620 015, India  \\
\small email : venkatpalagiri@gmail.com, kvi@nitt.edu
} 
\date{\small 19 May 2010}    
\begin{document}
\maketitle
\begin{abstract}  
For integers $k>0$ and $r>0$, a conditional $(k,r)$-coloring of a graph $G$ is a proper $k$-coloring of the vertices of $G$ such that every vertex $v$ of degree $d(v)$ in $G$ is adjacent to vertices with at least $\min\{r, d(v)\}$ different colors. The smallest integer $k$ for which a graph $G$ has a conditional $(k,r)$-coloring is called the $r$th order conditional chromatic number, denoted by $\chi_r(G)$. For different values of $r$ we obtain $\chi_r(G)$ of certain parameterized graphs viz., Windmill graph, line graph of Windmill graph, middle graph of Friendship graph, middle graph of a cycle, line graph of Friendship graph, middle graph of complete $k$-partite graph and middle graph of a bipartite graph. \\  \\   
\textbf{Keywords:} conditional coloring; conditional chromatic number; windmill graph; friendship graph; middle graph; line graph.  \\ \\
\textbf{MSC (2010) classification.} 68R10, 05C15.
\end{abstract}
\section{Introduction}
Let $G= (V(G),E(G))$ be a simple, connected, undirected graph. For a vertex $v \in V(G)$, the \textit{neighborhood} of $v$ in $G$ is defined by $N_G(v)$= \{$u \in V(G):(u,v) \in E(G)$\}, and the degree of $v$ is denoted by $d(v)$=$|N_G(v)|$.  For an integer $k>0$, a \textit{proper} $k$-coloring of a graph $G$ is a surjective mapping $c \colon V(G) \to \{1,\ldots,k \}$ such that if $(u,v) \in E(G)$, then $c(u) \neq c(v)$. The smallest $k$ such that $G$ has a proper $k$-coloring is the \textit{chromatic number} $\chi(G)$  of $G$. Given a set $S \subseteq V(G)$ we define $c(S)= \{c(u) : u \in S$ \}. For integers $k>0$ and $r>0$, a \textit{conditional} $(k,r)$-coloring of $G$ is a surjective mapping $c \colon V(G) \to \{1,\ldots,k \}$ such that both the following conditions hold:\begin{quote}
\textit{
(C1) If $(u,v) \in E(G)$, then $c(u) \neq c(v)$. \\
(C2) For any $v \in V(G)$, $|c(N_G(v))| \geq $ min \{$d(v),r$ \}. 
}
\end{quote}
For a given integer $r>0$, the smallest integer $k$ such that $G$ has a conditional $(k,r)$-coloring is called the \textit{$r^{th}$ order conditional chromatic number} of $G$, denoted by $\chi_r(G)$. It is proved in ~\cite{Li2}, that the problem of conditional $(k, r)$-coloring of a graph is hard.

We use the following definitions of certain graphs. The \textit{middle graph } $M(G)$ of $G$  is the graph whose vertex set corresponds to $V(G) \cup E(G)$; in $M(G)$ two vertices are adjacent iff 
$(i)$ they are adjacent edges of $G$ or 
$(ii)$ one is a vertex and the other is an edge incident with it ~\cite{danm}. The \textit{Windmill graph} $Wd(k,n)$ consists of $n$ copies of $K_k$ and identifying one vertex from each $K_k$ as the common center vertex. In particular $Wd(3,n)$ is called the \textit{Friendship graph} $F_n$ ~\cite{Galin}. A complete $k$-partite graph  $K_{n_1,\ldots,n_k}$ has vertex set $V=V_1 \cup \ldots \cup V_k$ where $V_1,\ldots,V_k$ are mutually disjoint with $|V_i|=n_i$; each vertex $v \in V_i$ is connected to all vertices of $V \setminus V_i, \; i=1,\ldots,k$. For undefined notations/terminology see standard texts in graph theory such as ~\cite{west}.
\section{Conditional colorability of some special graphs}
We begin with two lemmas followed by our propositions.
\newtheorem{lem1}{Lemma}
\begin{lem1}
For $r \leq \Delta$ let \textit{Vset-$d2r$} in a graph $G$ be a set $S_{d2r} \subseteq V(G)$ with the following two properties:
\begin{quote}
$(i)$ For all $u \in S_{d2r}$,  $d(u) \leq r$.  \\
$(ii)$ For all $u_1,u_2 \in S_{d2r}$ either $(u_1,u_2) \in E(G)$ or there exists a $u_3 \in S_{d2r}$ such that $u_1,u_2 \in N(u_3)$ or both.
\end{quote}
Then $\chi_r(G) \geq |S_{d2r}|$.
\end{lem1}
\begin{proof}
Assume that $\chi_r(G) < |S_{d2r}|$. Then there exist at least two vertices $u_1,u_2 \in S_{d2r}$ such that $c(u_1)=c(u_2)$. By the definition of Vset-$d2r$    $(ii)$ holds; if $(u_1,u_2) \in E(G)$, (C1) is violated; if $u_3 \in S_{d2r}$ such that $u_1,u_2 \in N(u_3)$, $|c(N(u_3))|< min\{r,d(u_3)\}=d(u_3)$ and hence (C2) is violated at $u_3$. Therefore  $\chi_r(G) \geq |S_{d2r}|$.
\end{proof}

\newtheorem{lem2}[lem1]{Lemma}
\begin{lem2}
Given a graph $G$, let $c \colon V(G) \to \{1,\ldots,k \}$ be a coloring such that for a given $r$, $c$ satisfies (C2). Let the condition (C3) be
\begin{quote}
(C3) For each edge $uv$ in $G$ there exists a vertex $w$ such that $d(w)$ $\leq r$ and $u,v \in N_{G}(w)$.
\end{quote}
If $G$ satisfies (C3) also then $c$ satisfies (C1) and hence $c$ defines a conditional $(k,r)$-coloring of $G$.
\end{lem2}
\begin{proof}
The proof by contradiction is straightforward.
%Assume that (C1) is not satisfied; then there exists two vertices $u,v \in V(G)$ such that $(u,v) \in E(G)$ and $c(u) = c(v)$.  Since for each edge $(u,v) \in E(G)$ there exists a vertex $w$ such that $d(w) \leq r$ and $u,v \in N(w)$ so (C2) is violated at $w$; a contradiction.
\end{proof}
\newtheorem{thm1}{Proposition}  
\begin{thm1} 
For the Windmill graph, we have \[\chi_r(Wd(k,n)) = \left\{ 
\begin{array}{l l}
k, & \quad \mbox{if $2 \leq r \leq k-1$. \textsl{}}\\  
min \{r,\Delta\}+1, & \quad \mbox{if $ r \geq k$. \textsl{}} \\ \end{array} \right. \]
\end{thm1}
\begin{proof}
Every vertex $v$ of $Wd(k,n)$ is contained in a $K_k$; it can be seen that $|c(N(v))| \geq k-1$ in any proper coloring $c$ of $Wd(k,n)$. Therefore if $2 \leq r \leq k-1 $,  conditional $(\chi(Wd(k,n)),r)$-coloring of $Wd(k,n)$ exists and we know that $\chi(Wd(k,n))=k$. 
From ~\cite{Lai1} we have $\chi_r(G) \geq min \{r, \Delta \}+1$. Taking $G=Wd(k,n)$ we get $\chi_r(Wd(k,n))$  $\geq min \{r, \Delta \}+1$. In $Wd(k,n)$ only the center vertex has a degree $n(k-1)> k$. For a $k'$ if $k' > k$, then every $k$-colorable graph is also $k'$-colorable. Hence if $ r \geq k$, then a proper $(min \{r,\Delta\}+1)$-coloring of $Wd(k,n)$ exists, which is also a conditional $(min \{r,\Delta\}+1,r)$-coloring. Therefore $\chi_r(Wd(k,n)) \leq min \{r, \Delta \}+1$. Hence $\chi_r(Wd(k,n)) = min \{r, \Delta \}+1$  if $r \geq k$.    
\end{proof}
\newtheorem{thm}[thm1]{Proposition} 
\begin{thm}
Let $L(Wd(k,n))$ be the line graph of $Wd(k,n)$. Then 
\begin{equation*}
\chi_\Delta(L(Wd(k,n)))= n(k-1)+\binom{k-1}{2}= z \;(say).
\end{equation*} 
\end{thm}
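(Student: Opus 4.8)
The plan is to exploit the special role of $r=\Delta$. When $r=\Delta$, condition (C2) at a vertex $e$ of $H := L(Wd(k,n))$ reads $|c(N_H(e))| \ge \min\{d_H(e),\Delta\} = d_H(e)$, so it forces every neighbourhood to be rainbow; together with (C1) this means a conditional $(k,\Delta)$-colouring of $H$ is exactly a proper colouring in which any two vertices at distance $\le 2$ in $H$ receive different colours. Translating back to $Wd(k,n)$, I would first fix notation: center $c$, copies $1,\ldots,n$, with non-center vertices $v_1^{(t)},\ldots,v_{k-1}^{(t)}$ in copy $t$; call the $n(k-1)$ edges $cv_i^{(t)}$ the \emph{spokes} and the $n\binom{k-1}{2}$ edges $v_i^{(t)}v_j^{(t)}$ the \emph{rims}. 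Each vertex of $H$ must then be given a colour distinct from every vertex within distance two in $H$, which I would record as a conflict relation on the edges of $Wd(k,n)$.

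For the lower bound I would invoke Lemma~1 with the candidate set $S$ consisting of all $n(k-1)$ spokes together with all $\binom{k-1}{2}$ rims of a single copy, so that $|S| = z$. Condition $(i)$ of Lemma~1 is immediate since $r=\Delta$. For condition $(ii)$ I would run a short case analysis showing that any two elements of $S$ are adjacent in $H$ or share a common $H$-neighbour lying in $S$: two spokes meet at $c$; two disjoint rims $v_i^{(1)}v_j^{(1)}, v_a^{(1)}v_b^{(1)}$ are joined through the rim $v_j^{(1)}v_a^{(1)}\in S$; and a spoke $cv_\ell^{(s)}$ and a rim $v_i^{(1)}v_j^{(1)}$ are joined through the spoke $cv_i^{(1)}\in S$. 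This yields $\chi_\Delta(H) \ge z$.

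For the upper bound I would exhibit an explicit colouring with $z$ colours: give the spokes the $n(k-1)$ distinct colours $1,\ldots,n(k-1)$, and in \emph{each} copy colour its $\binom{k-1}{2}$ rims bijectively with the remaining colours $n(k-1)+1,\ldots,z$, using the same scheme in every copy. I would then verify that this colouring separates all pairs at distance $\le 2$: any two spokes differ by construction; any two rims of the same copy differ by the per-copy bijection; a spoke and a rim always differ because their colour ranges are disjoint. The decisive point is that two rims in different copies are at distance $\ge 3$ in $H$ (no edge of $Wd(k,n)$ meets two distinct copies except at $c$, which rims avoid), so they impose no constraint and may safely reuse colours. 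Checking (C2) then reduces to noting that any two neighbours of a vertex $e$ are themselves at distance $\le 2$, hence differently coloured, while (C1) is the distance-one case — which I could alternatively obtain from Lemma~2 after observing that $H$ satisfies (C3) (two edges meeting at $x$ have a further common neighbour: a triangle edge when $x$ is non-central, or another spoke when $x=c$). Combining the two bounds gives $\chi_\Delta(H)=z$.

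The step I expect to be the main obstacle is the distance analysis inside the line graph: correctly and exhaustively deciding which pairs of edges of $Wd(k,n)$ are within distance two in $H$ and which are not. In particular, the crux is establishing the cross-copy independence of the rim edges, since this is exactly what collapses the naive count $n(k-1)+n\binom{k-1}{2}$ down to $z = n(k-1)+\binom{k-1}{2}$; a careless treatment here would either overcount (wrongly forbidding the colour reuse across copies) or undercount (missing a genuine distance-two conflict within a single copy).
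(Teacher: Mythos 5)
Your proposal is correct and follows essentially the same route as the paper: the same Vset-$d2r$ (all spokes plus the rims of one copy) via Lemma~1 for the lower bound, and the same colouring (distinct colours on spokes, the rim colours reused bijectively in every copy) verified through (C2) and Lemma~2/(C3) for the upper bound. Your case analyses for condition $(ii)$ of Lemma~1 and for the cross-copy distance-$3$ claim are sound and in fact supply details the paper only asserts with ``it can be seen.''
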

\begin{proof}
It follows that $|V(L(Wd(k,n)))|=n\binom{k}{2} = inx(k,n)$ (say). Let $V(L(Wd(k,n)))= \{v_1,\ldots,v_{inx(k,n)}\}$. We assume that in $Wd(k,n)$, for all $1 \leq i \leq n,   
 v_{(i-1)(k-1)+1}$\  to\  $v_{i(k-1)}$ and $v_{n(k-1)+inx(k-1,i-1)+1}$\  to\  $v_{n(k-1)+inx(k-1,i)}$ represent respectively the edges of $i^{th}$ copy of $K_k$ incident with and not incident with the center vertex. It can be seen that $L(Wd(k,n))$ has a clique $\{ v_1,\ldots,v_{n(k-1)} \}$ and a Vset-$d2r$ $S_{d2r}=\{v_1,\ldots,v_z\}$. By lemma 1, $\chi_\Delta(L(Wd(k,n))) \geq |S_{d2r}|=z$. We now define the coloring assignment $c \colon V(L(Wd(k,n))) \to \{1,\ldots,z \}$ as follows: 
\begin{equation*}
c(v_i) =
\begin{cases}
i, & \text{if $1 \leq i \leq z.$}\\
i \mod {\binom{k-1}{2}}+n(k-1)+1, & \text{otherwise.}
\end{cases}
\end{equation*}
In $c$ the \textit{if}-case uses $z$ and the \textit{otherwise}-case doesn't use any extra color. For all $1 \leq i \leq n$ \\ $|c(\{v_{(i-1)(k-1)+1},\ldots,v_{i(k-1)}\})$ $\cup$  $c(\{v_{n(k-1)+ inx(k-1,i-1)+1},\ldots,v_{n(k-1)+ inx(k-1,i)}\})|$ \\ = $|\{v_{(i-1)(k-1)+1},\ldots,v_{i(k-1)}\} \cup \{v_{n(k-1)+inx(k-1,i-1)+1},\ldots,v_{n(k-1)+inx(k-1,i)}\}|$ and \\ $|c\{v_1,\ldots,v_{n(k-1)}\}|=n(k-1)$; hence (C2) is satisfied at all the vertices. The graph $G= L(Wd(k,n))$ can be seen to satisfy (C3) in lemma 2. By lemma 2, $\chi_\Delta (L(Wd(k,n))) \leq z$; hence $\chi_\Delta (L(Wd(k,n))) = z$.
\end{proof}
%% T or P 3
\newtheorem{thm2}[thm1]{Proposition} 
\begin{thm2}
Let $L(F_n)$ be the line graph of $F_n$. Then
\begin{equation*}
\chi_r(L(F_n)) =
\begin{cases}
2n, & \text{if $r < \Delta$.}\\ 
2n+1 , & \text{if $r = \Delta$.}
\end{cases}
\end{equation*}
\end{thm2}
\begin{proof}
We have the following two cases: \\
\textbf{Case 1:} $r < \Delta:$ Let $r'=\Delta-1$. Since $|V(L(F_n))|=3n$, let $V(L(F_n))= \{v_1,\ldots,v_{3n}\}$. We assume that for all $1 \leq i \leq n, v_{2i-1}$ and $v_{2i}$ represent the edges of $i^{\rm th}$ copy of $K_3$ incident with the center vertex and $v_{2n+i}$ represents the edge of $i^{\rm th}$ copy of $K_3$ not incident with the center vertex of $F_n$. As $\{v_1,\ldots,v_{2n} \}$ is the maximum size clique of $L(F_n)$, $\omega(L(F_n))=2n$. Since $\chi_r(G) \geq \omega(G)$, taking $r=r'$ and $G=L(F_n)$ we have $\chi_{r'}(L(F_n))\geq 2n$. We now define the coloring assignment $c \colon V(L(F_n)) \to \{1,\ldots,2n\}$ as follows: 
\begin{equation*}
c(v_i) =
\begin{cases}
i, & \text{if $1 \leq i \leq 2n$.}\\
2n, & \text{if $2n+1 \leq i \leq 3n-1$.}\\
1, & \text{if $i = 3n$.}
\end{cases}
\end{equation*}
Note that in $c$ the first case uses $2n$ colors and the remaining cases use no new colors. It can be verified that $c$ defines a conditional $(2n,r')$-coloring of $L(F_n)$. Thus $\chi_{r'} (L(F_n)) \leq 2n$; hence $\chi_{r'} (L(F_n))= 2n$. From ~\cite{Lai1} it follows that $\omega(G) \leq \chi_{r_1}(G) \leq \chi_{r_2}(G)$ if $r_1 \leq r_2$. Taking $G=L(F_n),r_1=r$ and $r_2=r'$ it follows that $\chi_r(L(F_n))= 2n$. \\
\textbf{Case 2:} $r = \Delta:$ Since $F_n=Wd(3,n)$, by theorem 1 we have $\chi_\Delta(L(Wd(3,n)))=\chi_\Delta(L(F_n))=2n+1$.
\end{proof}
\newtheorem{thm3}[thm1]{Proposition} 
\begin{thm3}
Let $M(K_{n_1,\ldots,n_k})$ be the middle graph of $K_{n_1,\ldots,n_k}$. Then 
\begin{equation*}
\chi_\Delta(M(K_{n_1,\ldots,n_k}))= k+l.
\end{equation*} 
where $n=\sum_{i=1}^k n_i$ and $l=1/2 \sum_{i=1}^k n_i(n-n_i)$.
\end{thm3}
\begin{proof}
We know that $K_{n_1,\ldots,n_k}$ has $l$ edges, $|V(M(K_{n_1,\ldots,n_k}))|=l+n$. Let $V(M(K_{n_1,\ldots,n_k}))=\{v_1,\ldots,v_{l+n}\}$ and $n_0=0$. We assume that $v_1$ to $v_l$ represent the edges and for all $1 \leq i \leq k, v_{l+1+\sum_{j=0}^{i-1} n_j}$ to $v_{l+\sum_{j=0}^i n_j}$ represent the $i^{\rm th}$ partition vertices of $K_{n_1,\ldots,n_k}$. Let $r=\Delta,V_e=\{v_1,\ldots,v_l\}$ and $V_v=\{v_{l+1},\ldots,v_{l+n}\}$. It can be easily seen that $M(K_{n_1,\ldots,n_k})$ has a Vset-$d2r$ given by \\ $S_{d2r}=V_e \; \cup \; \{v_{l+1},v_{l+n_1+1},v_{l+(n_1+n_2)+1},v_{l+(n_1+n_2+n_3)+1},\ldots,v_{l+(n_1+\ldots+n_{k-1})+1}\}$. \\ Thus by lemma 1, $\chi_r(M(K_{n_1,\ldots,n_k})) \geq |S_{d2r}|= k+l$. We now define the coloring assignment $c \colon V(M(K_{n_1,\ldots,n_k}))$   $\to \{1,\ldots,k+l\}$ as follows: 
\begin{equation*}
c(v_i) =
\begin{cases}
i, & \text{if $1 \leq i \leq l.$}\\ 
l+p, & \text{otherwise, where $p$ is such that $1+\sum_{j=0}^{p-1} n_j \leq i-l \leq \sum_{j=0}^p n_j$.}
\end{cases}
\end{equation*}
In $c$ the first case uses $l$ colors and the remaining case uses $k$ new colors. We have $|c(V_e)|=|V_e|$ and $c(V_e) \cap c(V_v)= \emptyset$; for any two vertices $v_i,v_{i'} \in V_v$ if there exists no $p$ such that $l+1+\sum_{j=0}^{p-1} n_j \leq i,i' \leq l+\sum_{j=0}^i n_j$ then $c(v_i) \neq c(v_{i'})$; hence (C2) is satisfied at all the vertices. Taking $G=M(K_{n_1,\ldots,n_k})$ it can be seen that (C3) of lemma 2 is satisfied. By lemma 2, $\chi_\Delta (M(K_{n_1,\ldots,n_k})) \leq k+l$. Hence $\chi_\Delta (M(K_{n_1,\ldots,n_k})) = k+l$. \\
\end{proof}
\newtheorem{thm4}[thm1]{Proposition} 
\begin{thm4}
For $n \geq 4$, let $M(C_n)$ be the middle graph of $C_n$. Then \[\chi_r(M(C_n)) = \left\{  
\begin{array}{l l}
3, & \quad \mbox{if $r = 2$. \textsl{}}\\  
4, & \quad \mbox {if $ r =3$. \textsl{}}\\ \end{array} \right. \] 
\end{thm4}
\begin{proof}
Let $V(M(C_n))= \{v_1,\ldots,v_{2n}\}$. We assume that $v_1$ to $v_n$ and  $v_{n+1}$ to $v_{2n}$ represent the vertices and edges of $C_n$ respectively where for all $i$ ($1 \leq i \leq n$) $v_{n+i}$ is incident with both $v_{i}$ and $v_{i \; mod  \; n + 1}$. We have the following cases:  \\
\textbf{Case 1:} $r=2:$ Since $r < \Delta,\; \chi_r(M(C_n)) \geq 3$. We define the coloring assignment $c \colon V(M(C_n)) \to \{1,2,3 \}$ thus:\\  
For even $n$ 
\begin{equation*}
c(v_i) =
\begin{cases}
1, & \text{if $1 \leq i \leq n.$}\\
2, & \text{if $n+1 \leq i \leq 2n$ and $i:$ odd.}\\
3, & \text{otherwise.}
\end{cases}
\end{equation*}
For odd $n$  
\begin{equation*}
c(v_i) =
\begin{cases}
1, & \text{if $i=1$ or both $n+1 \leq i \leq 2n$ and $i:$ odd.}\\
2, & \text{if $i=2n$ or $2 \leq i \leq n-1$.}\\
3, & \text{otherwise.}
\end{cases}
\end{equation*}
In both the cases it can be verified that $c$ defines a conditional $(3,r)$-coloring of $M(C_n)$. Thus $\chi_r (M(C_n)) \leq 3$; hence  $\chi_r (M(C_n))=3$. \\
\textbf{Case 2:} $r=3:$ Since $r < \Delta, \; \chi_r(M(C_n)) \geq 4$. We define the coloring assignment $c \colon V(M(C_n)) \to \{1,2,3,4 \}$ as follows:
\begin{equation*}
c(v_i) =
\begin{cases}
1, & \text{if $n+1 \leq i \leq 2n$ and $(i-n):$ even.}\\
2, & \text{if $1 \leq i \leq n$ and $i:$ odd.}\\
3, & \text{if $i=n+1$ or both $4 \leq i \leq n$ and $i:$ even.}\\ 
4, & \text{otherwise.}
\end{cases}
\end{equation*} 
It can be verified that $c$ defines a conditional $(4,r)$-coloring of $M(C_n)$. Thus $\chi_r (M(C_n)) \leq 4$; hence  $\chi_r (M(C_n))=4$. 
\end{proof}
\newtheorem{thm5}[thm1]{Proposition} 
\begin{thm5}
Let $M(F_n)$ be the middle graph of $F_n$. Then \[\chi_r(M(F_n)) = \left\{  
\begin{array}{l l}
2n+1, & \quad \mbox{if $r \leq 2n$. \textsl{}}\\  
2n+2, & \quad \mbox {if $ r =2n+1$. \textsl{}}\\ 
2n+4, & \quad \mbox {if $ r =\Delta$. \textsl{}}\\ \end{array} \right. \]
\end{thm5}
\begin{proof}
From the definition we have $|V(M(F_n))|=5n+1$. Let $V(M(F_n))= \{v_1,\ldots,v_{5n+1}\}$. We assume that for all $i$ ($1 \leq i \leq n) \; v_{2i-1}$ and $v_{2i}$ represent the edges of $i^{\rm th}$ copy of $K_3$ incident with the center vertex, $v_{2(n+i)}$ and $v_{2(n+i)+1}$ represent the vertices of $i^{\rm th}$ copy of $K_3$ excluding the center vertex,$v_{4n+i+1}$ represents the edge of $i^{\rm th}$ copy of $K_3$ not incident with the center vertex and $v_{2n+1}$ represents the center vertex of $F_n$. It is clear that $\Delta(M(F_n))=2n+2$. We have the following cases\ : \\
\textbf{Case 1:} $r \leq 2n:$ Let $r'=2n$. Since $S=\{v_1,\ldots,v_{2n+1}\}$ is the maximum size clique of $M(F_n), \omega(M(F_n))=2n+1$. We know that $\chi_{r}(G) \geq \omega(G)$, taking $G=M(F_n)$ and $r=r'$, we get $\chi_{r'}(M(F_n))\geq 2n+1$. Now we define the coloring assignment $c \colon V(M(F_n)) \to \{1,\ldots,2n+1 \}$ as follows: 
\begin{equation*}
c(v_i) =
\begin{cases}
i, & \text{if $1 \leq i \leq 2n+1$.}\\
3, & \text{if $i=2n+2$.}\\
4, & \text{if $i=2n+3$.}\\ 
1, & \text{if $2n+4 \leq i \leq 4n+1$ and $(i-2n):$ even.}\\
2, & \text{if $2n+4 \leq i \leq 4n+1$ and $(i-2n):$ odd.}\\
2n+1, & \text{if $4n+2 \leq i \leq 5n+1$.}
\end{cases}
\end{equation*}
It is clear that the total number of colors used in $c$ is $2n+1$. Since $S$ is a clique, $|c(S)|=|S|=2n+1$ and $r'<2n+1$, for all $v \in S$ (C2) is satisfied at $v$. For all $i$  ($1 \leq i \leq n)$  $|c(\{v_{2i-1},v_{2i},v_{2(n+i)},v_{2(n+i)+1},v_{4n+i+1}\})|=|\{v_{2i-1},v_{2i},v_{2(n+i)},v_{2(n+i)+1},v_{4n+i+1}\}|$; therefore the remaining vertices also satisfy (C2). If we take $G$ to be $M(F_n)$ it follows (C3) of lemma 2 is satisfied. By lemma 2, $\chi_{r'} (M(F_n)) \leq 2n+1$; hence  $\chi_{r'} (M(F_n))= 2n+1$. From [3] we infer $\omega(G) \leq \chi_{r_1}(G) \leq \chi_{r_2}(G)$ if $r_1 \leq r_2$. Taking $G=M(F_n),r_1=r$ and $r_2=r'$, it follows that $\chi_r(M(F_n))= 2n+1$.\\ 
\textbf{Case 2:} $r=2n+1:$ Since $r < \Delta, \; \chi_r(M(F_n)) \geq r+1$. Now we define the coloring assignment $c' \colon V(M(F_n)) \to \{1,\ldots,2n+2 \}$ as follows: 
\begin{equation*}
c'(v_i) =
\begin{cases}
c(v_i)\, \; \text{as defined in case 1}, & \text{if $1 \leq i \leq 4n+1$}\; \text{and}\\
2n+2, & \text{otherwise.}
\end{cases}
\end{equation*}
In $c'$ the \textit{if}-case uses $2n+1$ and the \textit{otherwise}-case uses one new color. Since $|c'(S)|=|S|$ and $c'(S) \cap c'(\{v_{4n+2},\ldots,v_{5n+1}\})= \emptyset$, for all $v \in S$ (C2) is satisfied at $v$. By extending the argument similar to case 1, we can conclude that $c'$ defines a conditional $(2n+2,r)$-coloring of $M(F_n)$. Thus $\chi_{r} (M(F_n)) \leq 2n+2$; hence  $\chi_{r} (M(F_n))= 2n+2$. \\
\textbf{Case 3:} $r=\Delta:\,M(F_n)$ has a Vset-$d2r$ $S_{d2r}=\{v_1,\ldots,v_{2n+3},v_{4n+2}\}$; by lemma 1, $\chi_{r} (M(F_n)) \geq |S_{d2r}|= 2n+4$. We now define the coloring assignment $c \colon V(M(F_n)) \to \{1,\ldots,2n+4 \}$ as follows: 
\begin{equation*}
c(v_i) =
\begin{cases}
i, & \text{if $1 \leq i \leq 2n+3$.}\\
2n+3, & \text{if $2n+4 \leq i \leq 4n+1$ and $(i-2n):$ even.}\\
2n+4, & \text{if $i=4n+2$ or both $2n+4 \leq i \leq 4n+1$ and $(i-2n):$ odd.}\\
2n+2, & \text{if $4n+3 \leq i \leq 5n+1$.}
\end{cases}
\end{equation*}  
It is clear that in $c$ the total number of colors used is $2n+4$. For all $i$ ($1 \leq i \leq n)$ we have $|c(S) \cup c(\{v_{2(n+i)},v_{2(n+i)+1},v_{4n+i+1}\})|=|S \cup \{v_{2(n+i)},v_{2(n+i)+1},v_{4n+i+1}\}|$; therefore all the vertices satisfy (C2). With $G=M(F(n))$ we see that (C3) of lemma 2 is satisfied. Thus by lemma 2,  $\chi_r(M(F_n)) \leq 2n+4$. Hence  $\chi_r (M(F_n))= 2n+4$. 
\end{proof}
\newtheorem{thm6}[thm1]{Proposition}
\begin{thm6}
Let $M(K_{n_1,n_2})$ be the middle graph of $K_{n_1,n_2}$ and w.l.o.g. assume $n_1 \leq n_2$. Then 
\begin{equation*}
\chi_r (M(K_{n_1,n_2})) = 
\begin{cases}
n_2+1, & \text{if $\,r \leq n_2$.}\\
n_2+2, & \text{if $\,r = n_2+1$.} 
\end{cases}
\end{equation*}
\end{thm6}
\begin{proof}
We have $|V(M(K_{n_1,n_2}))|=n_1n_2+n$ where $n=n_1+n_2$. Let the vertex set $V(M(K_{n_1,n_2}))$ be $\{v_1,\ldots,v_{n+n_1n_2}\}$. We assume that $v_1$ to $v_{n_1}$ represent the vertices of the first partition, $v_{n_1+1}$ to $v_n$ represent the vertices of the second partition and $v_{n+1}$ to $v_{n+n_1n_2}$ represent the edges of $K_{n_1,n_2}$. We also assume that for all $i$ ($1 \leq i \leq n_1) \; v_{n+(i-1)n_2+1}$ to $v_{n+in_2}$ represent the edges incident at $v_i$ and for all $j$ ($1 \leq j \leq n_2) \; v_{n+(i-1)n_2+j}$ is incident with $v_i$ and $v_{n_1+j}$. It is clear that $\chi_\Delta (M(K_{n_1,n_2})) =n$. We have the following cases: \\ 
\textbf{Case 1:} $r \leq n_2:$ Let $r'=n_2$. Since $\{v_1,v_{n+1},\ldots,v_{n+n_2}\}$ is the maximum size clique of $M(K_{n_1,n_2})$, $\omega(M(K_{n_1,n_2}))=n_2+1$. We know, $\chi_{r}(G) \geq \omega(G)$; taking $G=M(K_{n_1,n_2})$ and $r=r'$, we get $\chi_{r'}(M(K_{n_1,n_2}))\geq n_2+1$. We define the coloring assignment $c \colon V(M(K_{n_1,n_2}))$ $\to \{1,\ldots,n_2+1 \}$ as follows: 
\begin{equation*}
c(v_i) =
\begin{cases}
n_2+1, & \text{if $1 \leq i \leq n$.}\\
1+(\lfloor(i-1-n)/n_2\rfloor+(i-n)) \mod {n_2}, & \text{otherwise.}
\end{cases}
\end{equation*}
In $c$ the \textit{if}-case uses one and the \textit{otherwise}-case uses $n_2$ new colors. For all $i$ ($1 \leq i \leq n_1) \; |c(N[v_i])|=|c(\{v_i,v_{n+(i-1)n_2+1},\ldots,v_{n+in_2}\})|=|\{v_i,v_{n+(i-1)n_2+1},\ldots,v_{n+in_2}\}|=|N[v_i]|$;  hence for all $v \in$ $V(M(K_{n_1,n_2})) \setminus \{v_{n_1+1},\ldots,v_n\}$ (C2) is satisfied at $v$. For all $j$ ($n_1+1 \leq j \leq n) \; |c(N(v_j))|=|c(\{v_{j+in_2}: \forall i: 1 \leq i \leq n_1 \})|=|\{v_{j+in_2}: \forall i: 1 \leq i \leq n_1 \}|=|N(v_j)|$; hence (C2) is satisfied at all $v_j$. Now (C3) of lemma 2 is satisfied if we set $G=M(K_{n_1,n_2})$. By lemma 2, $\chi_{r'} (M(K_{n_1,n_2})) \leq n_2+1$; hence  $\chi_{r'} (M(K_{n_1,n_2}))= n_2+1$. We know that $\omega(G) \leq \chi_{r_1}(G) \leq \chi_{r_2}(G)$ if $r_1 \leq r_2$. Taking $G=M(K_{n_1,n_2}),r_1=r$ and $r_2=r'$, it follows that $\chi_r(M(K_{n_1,n_2}))= n_2+1$.\\ 
\textbf{Case 2:} $r=n_2+1:$ Since $r < \Delta,\;$ $\chi_r(M(K_{n_1,n_2})) \geq r+1$. We now define the coloring assignment $c' \colon V(M(K_{n_1,n_2})) \to \{1,\ldots,n_2+2 \}$ as follows: 
\begin{equation*}
c'(v_i) =
\begin{cases}
n_2+2, & \text{if $1 \leq i \leq n_1$.}\\
c(v_i)\, \text{as defined in case 1}, & \text{otherwise.}
\end{cases} 
\end{equation*}
In $c'$ the \textit{if}-case uses one and the \textit{otherwise}-case uses $n_2+1$ new colors. $c'(\{v_1,\ldots,v_n\}) \cap c'(\{v_{n+1},\ldots,v_{n+n_1n_2}\})$ = $\emptyset,\,c'(\{v_1,\ldots,v_{n_1}\}) \cap c'(\{v_{n_1+1},\ldots,v_n\})=\emptyset$ and for all $i$ ($1 \leq i \leq n) \; |c'(N(v_i))|=|N(v_i)|$; hence (C2) is satisfied at all the vertices. By  setting $G=M(K_{n_1,n_2})$ we reason (C3) of lemma 2 is satisfied. By lemma 2,  $\chi_{r} (M(K_{n_1,n_2})) \leq n_2+2$. Hence  $\chi_r (M(K_{n_1,n_2})) = n_2+2$.
\end{proof}

\end{document}